\newcommand{\fixC}{\ensuremath{\mathit{K\hspace{-0.3em}a}}}
\newcommand{\C}{\ensuremath{\mathit{C\hspace{-0.3em}a}}}
\newcommand{\Cex}{\ensuremath{\mathit{C\hspace{-0.3em}A}}}
\newcommand{\ren}{\ensuremath{\mathsf{ren}}}
\newcommand{\rep}{\ensuremath{\mathfrak{A}}}
\newcommand{\anchor}{\ensuremath{\mathbf{a}}\xspace}
\newcommand{\cXVn}{\ensuremath{f}}
\newcommand{\ctxtstmt}{\ensuremath{\mathsf{St}}}
\newcommand{\ctxtfunc}{\ensuremath{\mathsf{Cx}}}
\newcommand{\cxvn}{contextualization\xspace}
\newcommand{\annstmt}[2]{\ensuremath{\langle#1,#2\rangle}}
\newcommand{\dl}{Description Logic\xspace}
\newcommand{\interp}{\mathcal{I}}
\newcommand{\nonctxtterms}{\ensuremath{\mathsf{N}^{\mathsf{nc}}}}
\newcommand{\ctxtterms}{\ensuremath{\mathsf{N}^{\mathsf{c}}}}
\newcommand{\ctxt}{\ensuremath{\mathsf{N}^{\mathsf{a}}}}
\newcommand{\terms}{\mathsf{N}}
\newcommand{\ctxtann}{\ensuremath{\mathfrak{C}}}
\newcommand{\icpo}{\ensuremath{\ct{isContextualPartOf}}}
\newcommand{\iic}{\ensuremath{\ct{isInContext}}}
\newcommand{\Nd}{\textit{NdTerms}\xspace}
\newcommand{\rel}{\ensuremath{\mathsf{Rel}}}
\newcommand{\relf}{\ensuremath{\mathsf{rel}}}
\newcommand{\model}{\ensuremath{\langle\Delta^\interp,\cdot^\interp\rangle}}
\newcommand{\sig}{\mathsf{Sig}}
\newcommand{\ax}{\mathsf{Ax}}
\newcommand{\ct}[1]{\ensuremath{\mathtt{#1}}}
\begin{document}

\title{Integrating Context of Statements within Description Logics}
\titlerunning{NdProperties}  
%
\author{Antoine Zimmermann\inst{1} \and Jos\'e M. Gim\'enez-Garc\'ia\inst{2}}
\authorrunning{Gim\'enez-Garc\'ia and Zimmermann} 
%
\tocauthor{Antoine Zimmermann and Jos\'e M. Gim\'enez-Garc\'ia}
\institute{Univ Lyon, MINES Saint-\'Etienne, CNRS, Laboratoire Hubert Curien\\ UMR 5516, F-42023 Saint-\'Etienne, France\\
\email{antoine.zimmermann@emse.fr}
\and
Universit\'e de Lyon, CNRS, UMR 5516,\\ Laboratoire Hubert-Curien, Saint-\'Etienne, France\\
\email{jose.gimenez.garcia@univ-st-etienne.fr}
}

\maketitle              

\thispagestyle{fancy}

\begin{abstract}
We address the problem of providing contextual information about a logical formula (\eg provenance, date of validity, or confidence) and representing it within a logical system. In this case, it is needed to rely on a higher order or non standard formalism, or some kind of reification mechanism. We explore the case of reification and formalize the concept of contextualizing logical statements in the case of Description Logics. Then, we define several properties of contextualization that are desirable. No previous approaches satisfy all of the them.
Consequently, we define a new way of contextually annotating statements. It is inspired by NdFluents, which is itself an extension of the 4dFluents approach for annotating statements with temporal context. In NdFluents, instances that are involved in a contextual statement are sliced into contextual parts, such that only parts in the same context hold relations to one another, with the goal of better preserving inferences. We generalize this idea by defining contextual parts of relations and classes. This formal construction better satisfies the properties, although not entirely. We show that it is a particular case of a general mechanism that NdFluents also instantiates, and present other variations. 

\keywords{Annotations, Contexts, Metadata, Ontologies, \dl, Semantic Web, Reasoning, Reification}
\end{abstract}

\section{Introduction} \label{sec:intro}

The problem of being able to reason not only \emph{with} logical formulas, but also \emph{about} said formulas, is an old one in artificial intelligence. 
McCarthy~\cite{McCarthy:87} proposed to extend first order logic by reifying context and formulas to introduce a binary predicate $\mathsf{ist}(\phi,c)$ satisfied if the formula $\phi$ is true ($\mathsf{ist}$) in the context $c$. However, a complete axiomatization and calculus for McCarthy's contextual logic has never been formalized. Giunchiglia~\cite{Giunchiglia:93} proposed
the grouping of ``local'' formulas in contexts, and then using other kinds of formulas to characterize how knowledge from multiple contexts is \emph{compatible}.
This idea of locality+compatibility \cite{Ghidini_Serafini:00} has led to several non standard formalisms for reasoning with multiple contexts~\cite{Zimmermann:13}.
Alternatively, the approach of annotated logic programming~\cite{Kifer_Subrahmanian:92} considers that a contextual annotation is just a value in an algebraic structure (\eg a number or a temporal interval). 
This idea was later applied to annotated RDF and RDFS~\cite{Udrea_Reforgiato_Subrahmanian:10,Zimmermann_Lopes_Polleres_Straccia:12}. 

The representation of statement annotation has sometimes being thought of as a data model problem without consideration of the logical formalism behind. In particular, several proposals to extend the RDF data model in various ways for allowing annotations have been made: named graphs~\cite{Carroll_Bizer_Hayes_Stickler:05}, RDF+~\cite{Dividino_Sizov_Staab_Schueler:09}, RDF*~\cite{Hartig_Thompson:14}, Yago Model~\cite{Hoffart_Suchanek_Berbeich_Weikum:13}. However, the underlying data structures have not a clear formal semantics.
Therefore, some authors advocate another approach to representing annotation of knowledge: reify the statement or its context and describe it within the formalism of the statement. This requires modifying the statement so as to integrate knowledge of the context or statement. Examples of such techniques are reification~\cite{RDF_Schema_W3C:14}, N-Ary Relations~\cite{N-ary_Relations_W3C:06}, Singleton Property~\cite{Nguyen_Bodenreider_Sheth:14}), and NdFluents~\cite{GimenezGarcia_Zimmermann_Maret:17}. \emph{This paper provides an abstraction of the reification techniques in the context of Description Logics (DLs) in the form of what we call contextualization functions}. Additionally, we introduce a new technique for the representation of contextual annotations that satisfies better some desirable properties.


After introducing our notations for DLs in \refsec{preliminaries}, we provide formal definitions that allow us to define verifiable properties of the reification techniques (\refsec{context}). Our new technique, named \Nd, is presented in \refsec{nd.approach}, where we also prove to what extent it satisfies the properties of the previous section. \refsec{multiple.domains} discuss some of the problems that may occur when combining knowledge having different annotations. In \refsec{other.approaches}, we present how the other approaches fit in our formalization and why they do not satisfy well the properties. Finally, we discuss this and future work in \refsec{discussion}.

\section{Preliminaries}\label{sec:preliminaries}

In this section, we introduce the notations and definitions we use in relation to Description Logics. Note that we use an extended version of DL where all terms  can be used as \emph{concept names}, \emph{role names}, and \emph{individual names} in the same ontology. Using the same name for different types of terms is known as ``punning'' in OWL~2~\cite[Section~2.4.1]{OWL_New_Features_W3C:12}. Moreover, we allow more constructs than in OWL~2 DL and make no restriction on their use in order to show that our approach is not limited to a specific DL.

We assume that there is an infinite set of \emph{terms} $\terms$. Every term is an \emph{individual}. A \emph{role} is either a term or, given roles $R$ and $S$, $R\sqcup S$, $R\sqcap S$, $\neg R$, $R^-$, $R\circ S$ and $R^+$. A \emph{concept} is either a term, or, given concepts $C$, $D$, role $R$, individuals $u_1,\dots,u_k$, and natural number $n$, $\bot$, $\top$, $C\sqcup D$, $C\sqcap D$, $\exists R.C$, $\forall R.C$, $\leq n R.C$, $\geq nR.C$, $\neg C$ or $\{u_1,\dots,u_k\}$. Finally, we also allow concept product $C\times D$ to define a role.

Interpretations are tuples $\langle\Delta^\interp,\cdot^{\interp_u},\cdot^{\interp_r},\cdot^{\interp_c}\rangle$, where $\Delta^\interp$ is a non-empty set (the domain of interpretation) and $\cdot^{\interp_u}$, $\cdot^{\interp_r}$, and $\cdot^{\interp_c}$ are the interpretation functions for individuals, roles and concepts respectively such that:
\begin{itemize}
 \item for all $u\in\terms$, $u^{\interp_u}\in\Delta^\interp$;
 \item for all $P\in\terms$, $P^{\interp_r}\subseteq\Delta^\interp\times\Delta^\interp$ and interpretation of roles is inductively defined by $(R\sqcup S)^{\interp_r}=R^{\interp_r}\cup S^{\interp_r}$, $(R\sqcap S)^{\interp_r}=R^{\interp_r}\cap S^{\interp_r}$, $(\neg R)^{\interp_r}=(\Delta^\interp\times\Delta^\interp)\setminus R^{\interp_r}$, $(R^-)^{\interp_r}=\{\langle x,y\rangle\vert\langle y,x\rangle\!\in\! R^{\interp_r}\}$, $(R\circ S)^{\interp_r}=\{\langle x,y\rangle\vert\exists z.\langle x,z\rangle\!\in\! R^{\interp_r}\wedge\langle z,y\rangle\!\in\! S^{\interp_r}\}$ and $(R^+)^{\interp_r}$ is the reflexive-transitive closure of $R^{\interp_r}$ (with $R$ and $S$ being arbitrary roles).
  \item for all $A\in\terms$, $A^{\interp_c}\subseteq\Delta^\interp$ and interpretation of concepts is defined by $\bot^{\interp_c}=\emptyset$, $\top^{\interp_c}=\Delta^\interp$, $(C\sqcup D)^{\interp_c}=C^{\interp_c}\cup D^{\interp_c}$, $(C\sqcap D)^{\interp_c}=C^{\interp_c}\cap D^{\interp_c}$, $(\exists R.C)^{\interp_c}=\{x\vert\exists y.y\!\in\! C^{\interp_c}\wedge\langle x,y\rangle\!\in\! R^{\interp_r}\}$, $(\forall R.C)^{\interp_c}=\{x\vert\forall y.\langle x,y\rangle\!\in\! R^{\interp_r}\Rightarrow y\!\in\! C^{\interp_c}\}$, $(\leq nR.C)^{\interp_c}=\{x\vert\sharp\{y\!\in\! C^{\interp_c}\vert\langle x,y\rangle\!\in\! R^{\interp_r}\}\leq n\}$, $(\geq nR.C)^{\interp_c}=\{x\vert\sharp\{y\!\in\! C^{\interp_c}\vert\langle x,y\rangle\!\in\! R^{\interp_r}\}\geq n\}$, $(\neg C)^{\interp_c}=\Delta^\interp\setminus C^{\interp_c}$, $\{u_1,\dots,u_k\}=\{u_1^{\interp_u},\dots,u_k^{\interp_u}\}$, where $C$ and $D$ are arbitrary concepts, $R$ an arbitrary role, $u_1,\dots,u_k$ are individual names, and $k$ and $n$ two natural numbers.
   \item Roles defined as a concept product are interpreted as $(C\times D)^{\interp_r}=C^{\interp_c}\times D^{\interp_c}$ for arbitrary concepts $C$ and $D$.
\end{itemize}

In the following, we slightly abuse notations by defining interpretations as pairs $\model$ where $\cdot^\interp$ denotes the three functions $\cdot^{\interp_u}$, $\cdot^{\interp_c}$, and $\cdot^{\interp_r}$. Moreover, when we write $x^\interp=y^{\interp'}$, it means ``$x^{\interp_u}=y^{\interp'_u}$ and $x^{\interp_c}=y^{\interp'_c}$ and $x^{\interp_r}=y^{\interp'_r}$''.

\emph{Axioms} are either general concept inclusions $C\sqsubseteq_c D$, sub-role axioms $R\sqsubseteq_r S$, instance assertions $C(a)$, or role assertions $R(a,b)$, where $C$ and $D$ are concepts, $R$ and $S$ are roles, and $a$ and $b$ are individual names.
An interpretation $\interp$ \emph{satisfies} axiom $C\sqsubseteq_c D$ \iff $C^{\interp_c}\subseteq D^{\interp_c}$; it satisfies $R\sqsubseteq_r S$ \iff $R^{\interp_r}\subseteq S^{\interp_r}$; it satisfies $C(a)$ \iff $a^{\interp_u}\!\in\! C^{\interp_c}$; and it satisfies $R(a,b)$ \iff $\langle a^{\interp_u},b^{\interp_u}\rangle\!\in\! R^{\interp_r}$. When $\interp$ satisfies an axiom $\alpha$, it is denoted by $\interp\models\alpha$. Instance assertions, role assertions and individual identities constitute the ABox axioms.

An ontology $O$ is composed of a set of terms called the signature of $O$ and denoted by $\sig(O)$, and a set of axioms denoted by $\ax(O)$. An interpretation $\interp$ is a model of an ontology $O$ \iff for all $\alpha\in\ax(O)$, $\interp\models\alpha$. In this case, we write $\interp\models O$. The set of all models of an ontology $O$ is denoted by $\mathrm{Mod}(O)$.
A \emph{semantic consequence} of an ontology $O$ is a formula $\alpha$ such that for all $\interp\in\mathrm{Mod}(O)$, $\interp\models\alpha$.

In the rest of the paper, we will use \texttt{teletype font} to denote known individuals, and normal font for unknown individuals and variables (\eg $\ct{City}(\ct{babylon})$ and $\ct{City}(x)$).

\section{Contextualization of Statements}\label{sec:context}





A contextual annotation can be thought of as a set of ABox axioms that describe an individual representing the statement (the anchor) that is annotated. An annotated statement (or ontology) is the combination of a DL axiom (or DL ontology) with a contextual annotation.

\begin{definition}[Connected individuals]\label{def:connected.individuals}
Two terms $a$ and $b$ are \emph{connected individuals} \wrt an ABox $A$ \iff $a$ and $b$ are used as individual names in $A$, and either
\begin{itemize}
	\item $a$ and $b$ are the same term, or
    \item there exists $R_1,\cdots,R_n$ and $z_1,\cdots,z_{n-1}$, such that:
    \begin{itemize}
    	\item $R_1(a,z_1)$, or $R_1(z_1,a)$
        \item $R_i(z_{i-1},z_i)$, or $R_i(z_i,z_{i-1})$, $2\leq i \leq n-2$
        \item $R_n(z_{n-1},b)$, or $R_n(b,z_{n-1})$
    \end{itemize}
\end{itemize}
\end{definition}

\begin{example}\label{ex:connected.individuals}
If we consider the ABox $A = \{P(a,b),$ $Q(c,b),$ $S(d,e)\}$, the pairs of individuals $\{a,b\}$, $\{b,c\}$, $\{a,c\}$, and $\{d,e\}$ are \emph{connected individuals}, but $\{a,d\}$, $\{b,d\}$, $\{c,d\}$, $\{a,e\}$, $\{b,e\}$, and $\{c,e\}$ are not.
\end{example}

\begin{definition}[Contextual annotation]\label{def:contextual.annotation}
A \emph{contextual annotation} $\C$ is an ABox with signature $\{\anchor\}\cup\Sigma$ where $\anchor\not\in\Sigma$ is a distinguished term (called the \emph{anchor}) and $\Sigma$ is a DL signature such that $\forall x \in \Sigma$, $\{a,x\}$ are \emph{connected individuals}.
\end{definition}

\begin{example}\label{ex:contextual.annotation}
The Abox $\Cex=\{\ct{validity}(\anchor,\ct{t}),$ $\ct{Interval}(\ct{t})$ $\ct{from}(\ct{t},\ct{609BC}),$ $\ct{to}(\ct{t},\ct{539BC})$, $\ct{prov}(\anchor,\ct{w}),$ $\ct{name}(\ct{w},\ct{wikipedia}),$ $\ct{Wiki}(\ct{w})\}$ is a \emph{contextual annotation}, where $\anchor$ is the anchor and $\Sigma = \{\ct{t},$ $\ct{Interval},$ $\ct{w},$ $\ct{wikipedia},$ $\ct{wiki},$ $\ct{609BC},$ $\ct{539BC}\}$.
\end{example}

\begin{definition}[Annotated statement]\label{def:annotated.statement}
An \emph{annotated statement} is a pair $\annstmt{\alpha}{\C}$ such that $\alpha$ is a description logic axiom and $\C$ is a contextual annotation.
\end{definition}

\begin{example}\label{ex:annotated.statement}
The pair $\annstmt{\alpha}{\Cex}$, where $\alpha = \ct{capital}(\ct{babylon},$ $\ct{babylonianEmpire})$ and $\Cex$ is the contextual annotation from \refex{contextual.annotation}, is an \emph{annotated statement}.
\end{example}

\begin{definition}[Annotated ontology]\label{def:annotated.ontology}
An \emph{annotated ontology} is a pair $\annstmt{O}{\C}$ such that $O$ is a description logic ontology and $\C$ is a contextual annotation.
\end{definition}


Each reification technique has an implicit construction plan in order to map an annotated statement to a resulting ontology. A contextualization (\refdef{contextualization}) represents the procedure that generates a single DL ontology from a given annotated statement or ontology. The procedure must not lose information, especially not the annotation. 


\begin{definition}[Contextualization]\label{def:contextualization}
A \emph{\cxvn} is a function $\cXVn$ that maps each annotated statement $\alpha_\C = \annstmt{\alpha}{\C}$ to a description logic ontology $\cXVn(\alpha_\C)$ = $\ctxtstmt(\alpha_\C) \cup \ctxtfunc(\alpha_\C)$ such that:
\begin{itemize}
	\item there exists an individual $u$ in the signature of $\ctxtstmt(\alpha_\C)$ and of $\ctxtfunc(\alpha_\C)$ such that:
    \begin{itemize}
    	\item for all $R(\anchor,x)\in \C$, $R(u,x)\in \ctxtfunc(\alpha_\C)$;
    	\item for all $R(x,\anchor)\in \C$, $R(x,u)\in \ctxtfunc(\alpha_\C)$;
    	\item for all $C(\anchor)\in \C$, $C(u)\in \ctxtfunc(\alpha_\C)$;
    	\item for all other $\alpha\in \C$, $\alpha\in \ctxtfunc(\alpha_\C)$.
  \end{itemize}
  	\item there is an injective mapping between the signature of $\alpha$ and the signature of $\ctxtstmt(\alpha_C)$. 
\end{itemize}
We extend $\cXVn$ to all annotated ontologies $O_\C = \annstmt{O}{\C}$ by defining $\cXVn(O_\C) = \bigcup_{\alpha\in O}\cXVn(\annstmt{\alpha}{\C})$.
\end{definition}

\begin{example}\label{ex:contextualization}
An example contextualization function $f_{ex}$ introduces a fresh term $t$ for each annotated statement with a role assertion $R(a,b)$, where $R$, $a$, and $b$ are three therms, creates new axioms $\ct{subject}(t,s)$, $\ct{predicate}(t,R)$, $\ct{object}(t,o)$, and finally removes the axiom $R(a,b)$. Notice that this construction requires the punning of term $R$. This function is analogous to \emph{RDF Reification}. The result of this contextualization, along with Other possible known approaches, is described in \refsec{other.approaches}.
\end{example}

Those are the only structures that we will consider in this paper. The remaining definitions are desirable properties that a contextualization should satisfy, especially if one wants it to preserve as much of the original knowledge as possible. 

\begin{definition}[Soundness]\label{def:soundness}
A \cxvn function $\cXVn$ is \emph{sound} \wrt a set of annotated ontologies $\Omega$ \iff for each $O_\C = \annstmt{O}{\C}\in\Omega$ such that $O$ and $\C$ are consistent, then $\cXVn(O_\C)$ is consistent.
\end{definition}

That is, a contextualization is sound if, when contextualizing a consistent ontology, the result is also consistent. This property avoids that the contextualization introduces unnecessary contradictions that would result in everything being entailed by it. Note that this requirement is not necessary in the opposite direction, \ie if $\cXVn(O_\C)$ is consistent, it is not required that $O$ and $C$ are consistent.  

\begin{example}\label{ex:soundness}
The contextualization function $f_{ex}$ from \refex{contextualization} is sound \wrt the set of ontologies $\Omega$, where $\Omega \cup \{\ct{subject},$ $\ct{predicate},$ $\ct{predicate}\} = \emptyset$.
\end{example}


\begin{definition}[Inconsistency preservation]\label{def:inconsistency.preservation}
Let $\cXVn$ be a \cxvn function. We say that $\cXVn$ preserves inconsistencies \iff for all annotated ontologies $O_\C = \annstmt{O}{\C}$, if $O$ is inconsistent then $\cXVn(O_\C)$ is inconsistent.
\end{definition}


Inconsistency preservation means that a self-contradictory ontology in a given context is contextualized into an inconsistent ontology, such that bringing additional knowledge from other contexts would result in no more consistency. If something is inconsistent within a context, then it is not really worth to consider reasoning with this annotated ontology.

\begin{example}\label{ex:inconsistency.preservation}
The contextualization function $f_{ex}$ from \refex{contextualization} does not preserve inconsistencies. For instance, $\ct{capitalOf}$ can be defined as irreflexive using the following axiom: $\exists \ct{capitalOf}.\top  \sqsubseteq \forall \ct{capitalOf^-}.\bot$. Then, the axiom $\ct{capitalOf}(\ct{babylon},$ $\ct{babylon})$ would make the ontology inconsistent. But when applying $f_{ex}$ the result is consistent.
 \end{example}

\begin{definition}[Entailment preservation]\label{def:entailment.preservation}
Let $\cXVn$ be a \cxvn function. Given two description logic ontologies $O_1$ and $O_2$ such that $O_1\models O_2$, we say that $\cXVn$ \emph{preserves the entailment} between $O_1$ and $O_2$ \iff for all contextual annotations $\C$, $\cXVn(\annstmt{O_1}{\C}) \models \cXVn(\annstmt{O_2}{\C})$.
Given a set $\mathfrak{K}\subseteq\ctxt$ of contextual annotations, if $\cXVn$ preserves all entailments between ontologies in $\mathfrak{K}$, then we say that $\cXVn$ is \emph{entailment preserving} for $\mathfrak{K}$.
\end{definition}

In short, a contextualization is entailment preserving if all the knowledge that could be inferred from the original ontology can also be inferred, in the same context, in the contextualized ontology.


\begin{example}\label{ex:entailment.preservation}
The contextualization function $f_{ex}$ from \refex{contextualization} preserves entailments for the TBox of ontologies (because no modifications are made on its axioms), but it does not preserve entailments on role assertions. For instance,  the axioms $\ct{capitalOf} \sqsubseteq \ct{cityOf}$, $\ct{capitalOf}(\ct{babylon},$ $\ct{babylonianEmpire})$ entails $\ct{cityOf}(\ct{babylon},$ $\ct{babylonianEmpire})$, but this inference is not preserved after applying $f_{ex}$.
\end{example}






\section{The \Nd Approach} \label{sec:nd.approach}

This section defines the \Nd approach that extends the NdFluent proposal~\cite{GimenezGarcia_Zimmermann_Maret:17}
. To this end, we assume that terms are divided into three infinite disjoint sets $\nonctxtterms$, $\ctxtterms$, and $\ctxt$ called the \emph{non contextual terms}, the \emph{contextual terms}, and the \emph{anchor terms} respectively.
We also assume that there is an injective function $\rep:\ctxtann\to\ctxt$ and for all $\C\in\ctxtann$ there is an injective function $\ren_\C:\nonctxtterms\to\ctxtterms$ and two terms $\icpo,\iic\in\nonctxtterms$. For any $\C$, we extend $\ren_\C$ to axioms by defining $\ren_\C(\alpha)$ as the axiom built from $\alpha$ by replacing all terms $t\in\sig(\alpha)$ with $\ren_\C(t)$.

\subsection{Contextualization Function in \Nd}\label{sec:contextualization.function.in.nd}

The contextualization needs to combine the ontologies from the statements and the contextual annotation. However, if we na\"ively make the union of the axioms, they could contradict, and it would not be possible to ensure the desired properties. For example, an ontology may restrict the size of the domain of interpretation to be of a fixed cardinality, while the contextual annotation may rely on more elements outside the local universe of this context. For this reason we use the concept or relativization: The ontology is modified in such a way that the interpretation of everything explicitly described in it is confined to a set, while
external terms or constructs may have elements outside said set. Relativization has been applied in various logical settings over the past four decades (\eg~\cite{Scott:79}) and applied to DLs and OWL~\cite{CuencaGrau_Kutz:07}, among others.

\begin{definition}[Relativization]\label{def:relativization}
Let $\C$ be a contextual annotation. Given an ontology $O$, the relativization of $O$ in $\C$ is an ontology $\rel_\C(O)$ built from $O$ as follows:
 \begin{enumerate}
  \item $\sig(\rel_\C(O)) = \sig(O) \cup \top_\C$ where $\top_\C$ is a term not appearing in $\sig(O)$;
  \item\label{itm:top} for all appearances of $\top$ in an axiom of $O$, replace $\top$ with $\top_\C$;
  \item\label{itm:notc} for all concepts $\neg C$ appearing in an axiom of $O$, replace it with $\neg C\sqcap\top_\C$;
  \item\label{itm:notr} for all roles $\neg R$ appearing in an axiom of $O$, replace it with $\neg R\sqcap (\top_\C\times\top_\C)$;
  \item\label{itm:forall} for all concepts $\forall R.C$ appearing in an axiom of $O$, replace it with $\forall R.C\sqcap\top_\C$;
  \item\label{itm:transclosure} for all roles $R^+$ appearing in an axiom of $O$, replace it with $R^+\sqcap\top_\C\times\top_\C$.
  \item\label{itm:extra} Additionally, for all terms $t\in\sig(O)$, the following axioms are in $\rel_\C(O)$:
   \begin{itemize}
    \item $t\sqsubseteq \top_\C$,
    \item $\top_\C(t)$,
    \item $\exists t.\top\sqsubseteq \top_\C$,
    \item $\top\sqsubseteq \forall t.\top_\C$.
   \end{itemize}
 \end{enumerate}
\end{definition}

The relativization of an ontology can be done systematically by relativizing its concepts and roles, which in turn can be achieved by using \refdef{reativization.of.concepts.and.roles}.

\begin{definition}[Relativization of concepts and roles]\label{def:reativization.of.concepts.and.roles}
Given a contextual annotation $\C$, we define a function $\relf_\C$ that maps concepts and roles to concepts and roles according to the rules of Items~\ref{itm:top}-\ref{itm:transclosure}. That is, recursively:
 \begin{itemize}
  \item $\relf_\C(t) = t$;
  \item $\relf_\C(\{u_1,\dots u_k\}) = \{\relf_\C(u_1),\dots relf(u_k)\}$;
  \item $\relf_\C(C \sqcup D) = \relf_\C(C) \sqcup \relf_\C(D)$;
  \item $\relf_\C(C \sqcap D) = \relf_\C(C) \sqcap \relf_\C(D)$;
  \item $\relf_\C(\neg C) = \neg\relf_\C(C)\sqcap\top_\C$;
  \item $\relf_\C(C\times D) = \relf_\C(C)\times\relf_\C(D)$;
  \item $\relf_\C(R \sqcup S) = \relf_\C(R) \sqcup \relf_\C(S)$;
  \item $\relf_\C(R \sqcap S) = \relf_\C(R) \sqcap \relf_\C(S)$;
  \item $\relf_\C(R \circ S) = \relf_\C(R) \circ \relf_\C(S)$;
  \item $\relf_\C(\neg R) = \neg\relf_\C(R)\sqcap\top_\C\times\top_\C$;
  \item $\relf_\C(R^-) = \relf_\C(R)^-$;
  \item $\relf_\C(R^+) = \relf_\C(R)^+\sqcap\top_\C\times\top_\C $;
  \item $\relf_\C(\exists R.C) = \exists\relf_\C(R).\relf_\C(C)$;
  \item $\relf_\C(\forall R.C) = \forall\relf_\C(R).\relf_\C(C) \sqcap \top_\C$;
  \item $\relf_\C(\geq n R.C) = \geq n~\relf_\C(R).\relf_\C(C)$;
  \item $\relf_\C(\leq n R.C) = \leq n~\relf_\C(R).\relf_\C(C)$.
 \end{itemize}
 where $t$ is a term, $C, D$ are concepts, $R, S$ are roles, $u_1,\dots u_k$ are individuals, and $k, n$ are natural numbers.
\end{definition}

\begin{example}
The axiom $\exists \ct{capitalOf}.\top  \sqsubseteq \forall \ct{capitalOf^-}.\bot$ from \refex{inconsistency.preservation} is relativized into $\exists \ct{capitalOf}.\top_\C \sqsubseteq \forall \ct{capitalOf^-}.\bot \sqcap \top_\C$.
\end{example}

Then, the contextualization in \Nd is done by: (1) creating the replacement of the anchor using the function \rep, (2) renaming all the terms in the statement using the \ren~function, (3) linking them to the original terms by the $\icpo$ relation, and (4) linking the renamed terms to the context using the $\iic$ relation.

\begin{definition}[Contextualization in \Nd]\label{def:contextualization.in.nd}
Let $\C\in\ctxtann$ be any contextual annotation. Let $\alpha_{\C} = \annstmt{\alpha}{\C}$ be an annotated statement such that the signatures of $\alpha$ and $\C$ are in $\nonctxtterms$. We define the contextualization function $\cXVn_{\mathsf{nd}}$ such that $\cXVn_{\mathsf{nd}}(\alpha_\C)$ = $\ctxtstmt(\alpha_\C) \cup \ctxtfunc(\C)$ and:
\begin{itemize}
 \item $\ctxtstmt_\C(\alpha) = \{\ren_\C(\rel_\C(\alpha))\} \cup \{\icpo(\ren_\C(t),t)\mid t\in\sig(\alpha)\} \cup \{\iic(\ren_\C(t),\rep(\C))\mid t\in\sig(\alpha)\}$.
 \item $\ctxtfunc(\C)$ contains exactly the following axioms:
  \begin{itemize}
   \item for all $R(\anchor,x)\in \C$, $R(\rep(\C),x)\in \ctxtfunc(\alpha)$;
   \item for all $R(x,\anchor)\in \C$, $R(x,\rep(\C))\in \ctxtfunc(\C)$;
   \item for all $C(\anchor)\in \C$, $C(\rep(\C))\in \ctxtfunc(\C)$;
   \item for all other axioms $\beta\in\C$, $\beta\in \ctxtfunc(\C)$.
  \end{itemize}
\end{itemize}
\end{definition}

Similarly to \refex{contextualization}, this construction requires punning, since all terms in the statement are used as individual names in the role assertion $\icpo(\ren_\C(t),$ $t)$.

\begin{example}
The NdTerms contextualization of our running example within the context $\Cex$ of \refex{contextual.annotation} contains the following axioms, where \emph{term@\C} is the result of the renaming function $\ren_\C($\emph{term}$)$:

$\ct{capitalOf@}(\ct{babylon@\Cex},\ct{babylonianEmpire@\Cex})$

$\icpo(\ct{babylon@\Cex},\ct{babylon})$

$\icpo(\ct{babylonianEmpire@\Cex},\ct{babylonianEmpire})$

$\iic(\ct{babylon},\ct{exampleContext})$

$\iic(\ct{babylonianEmpire},\ct{exampleContext})$

$\ct{validity}(\ct{exampleContext},\ct{t})$

$\ct{Interval}(\ct{t})$

$\ct{from}(\ct{t},\ct{609BC})$

$\ct{to}(\ct{t},\ct{539BC})$

$\ct{prov}(\ct{exampleContext,\ct{w}}$

$\ct{name}(\ct{w},\ct{wikipedia})$

$\ct{Wiki}(\ct{w})$
\end{example}

\subsection{Soundness of \Nd}\label{sec:soundness.of.nd}

In this section, we fix a contextual annotation $\fixC$ that has its signature in $\nonctxtterms$, since the following theorems and proofs require such a constraint.

The contextualization of \Nd is sound, but only \wrt annotated ontologies that satisfy certain conditions. In order to present the conditions, we need to introduce the following definition, that is also used in several proofs of this paper.

\begin{definition}[Domain extensibility]\label{def:domain.extensibility}
Let $O$ be an ontology. A model $\interp=\model$ of $O$ is \emph{domain extensible} for $O$ \iff for all sets $\Delta^+$, $\interp'=\langle\Delta^\interp\cup\Delta^+,\cdot^\interp\rangle$ is also a model of $O$. An ontology is said to be \emph{model extensible} \iff it has a model that is domain extensible.
\end{definition}

Note that, even if the domain of interpretation of an ontology is infinite, that does not necessarily mean that its models are domain extensible. This notion is closely related to the notion of \emph{expansion} in \cite{CuencaGrau_Kutz:07} since if $\interp$ is domain extensible, then one can build infinitely many expansions of it. 

\begin{theorem}[Soundness of \Nd]\label{thm:soundness.of.nd}
If the contextual annotation $\fixC$ is model extensible, then the contextualization function $\cXVn_{\mathsf{nd}}$ is sound \wrt annotated ontologies $O_\fixC = \langle O,\fixC\rangle$, $\sig(O)\subseteq\nonctxtterms$, and $\sig(O)\cap\sig(\fixC)=\emptyset$.
\end{theorem}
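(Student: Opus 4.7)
The plan is to build a single interpretation $\interp$ that simultaneously satisfies (i) the renamed relativized axioms $\ren_\fixC(\rel_\fixC(\alpha))$ for each $\alpha\in O$, (ii) the axioms of $\fixC$ with $\anchor$ replaced by $\rep(\fixC)$, and (iii) the bridging assertions involving $\icpo$ and $\iic$. Consistency of $O$ supplies a model $\interp_O$, and the model-extensibility hypothesis supplies a domain-extensible model $\interp'_\fixC$ of $\fixC$. First I would enlarge $\Delta^{\interp'_\fixC}$ by adjoining (a disjoint copy of) $\Delta^{\interp_O}$; extensibility guarantees that the resulting $\interp_\fixC$ is still a model of $\fixC$.

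Next I would define $\interp$ on this enlarged domain as follows. Every $t\in\sig(\fixC)$ retains its $\interp_\fixC$-interpretation, and $\rep(\fixC)$ is interpreted as $\anchor^{\interp_\fixC}$. Every $t\in\sig(O)$ and its renamed counterpart $\ren_\fixC(t)$ are both interpreted as $t^{\interp_O}$, while the special term $\top_\fixC$ introduced by relativization is interpreted as $\Delta^{\interp_O}$. Finally, $\icpo^\interp$ and $\iic^\interp$ are seeded with exactly the pairs required by the bridging axioms, in addition to whatever $\interp_\fixC$ may already assign to them. The hypotheses $\sig(O)\cap\sig(\fixC)=\emptyset$, together with the pairwise disjointness of $\nonctxtterms$, $\ctxtterms$, and $\ctxt$, ensure that this assignment is well defined and does not contradict $\interp_\fixC$ on any symbol of $\fixC$.

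Checking $\interp\models\cXVn_{\mathsf{nd}}(O_\fixC)$ then splits into three pieces. The bridging axioms hold by construction of $\icpo^\interp$ and $\iic^\interp$. The axioms of $\fixC$ with anchor substitution hold because $\interp$ coincides with $\interp_\fixC$ on every relevant symbol, including on $\rep(\fixC)$ in place of $\anchor$. For each $\ren_\fixC(\rel_\fixC(\alpha))$, the equality $\ren_\fixC(t)^\interp = t^{\interp_O}$ (for $t\in\sig(\alpha)$) reduces the check to showing that $\interp$ satisfies $\rel_\fixC(\alpha)$ when the symbols of $\alpha$ are read via $\interp_O$ and with $\top_\fixC^\interp=\Delta^{\interp_O}$.

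The main obstacle is this last step: a structural induction following \refdef{reativization.of.concepts.and.roles} that establishes $(\relf_\fixC(C))^\interp = C^{\interp_O}$ and $(\relf_\fixC(R))^\interp = R^{\interp_O}$ for every concept and role appearing in $\alpha$. The delicate cases are exactly those where relativization inserts a conjunction with $\top_\fixC$ or with $\top_\fixC\times\top_\fixC$, namely $\neg C$, $\neg R$, $\forall R.C$, and $R^+$, since the naive extensions of these constructors inside the enlarged domain could leak elements lying outside $\Delta^{\interp_O}$. Clipping each such extension to $\top_\fixC=\Delta^{\interp_O}$ restores agreement with $\interp_O$, while the positive constructors commute with the domain extension because their extensions are already contained in $\Delta^{\interp_O}$ by induction. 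Once this relativization lemma is established, the auxiliary axioms of item~\ref{itm:extra} of \refdef{relativization} become immediate, and soundness of $\cXVn_{\mathsf{nd}}$ follows.
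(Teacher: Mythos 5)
Your proposal is correct and follows essentially the same route as the paper: take disjoint models of $O$ and $\fixC$, merge their domains (justified by model extensibility of $\fixC$ on one side and by the domain extensibility of relativized ontologies on the other), interpret $\ren_\fixC(t)$ as $t^{\interp_O}$ and $\top_\fixC$ as $\Delta^{\interp_O}$, seed $\icpo$ and $\iic$ with exactly the required pairs, and discharge the relativized axioms via the structural induction showing $(\relf_\fixC(X))^\interp = X^{\interp_O}$. The only cosmetic difference is that you inline that induction where the paper factors it out as a separate lemma inside Theorem~\ref{thm:model.extensibility.of.relativized.ontologies}.
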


The proof of this theorem requires a few intermediary steps. \refthm{model.extensibility} ensures that, given the right condition, a model of the union of two ontologies can be made from two models of the original ontologies.

\begin{theorem}[Model extensibility theorem]\label{thm:model.extensibility}
Let $O$ and $O^+$ be two ontologies such that there exist two models $\interp=\model$ and $\interp^+\langle\Delta^{\interp^+},\cdot^{\interp^+}\rangle$ that are domain extensible for $O$ and $O^+$ respectively. If $\sig(O)\cap\sig(O^+)=\emptyset$ then $\interp$ and $\interp^+$ can be both extended into a unique model $\interp'$ of $O\cup O^+$ such that $\Delta^{\interp'}=\Delta^{\interp}\cup\Delta^{\interp^+}$ and for all $t\in\sig(O)$, $t^{\interp'}=t^\interp$, and for all $t\in\sig(O^+)$, $t^{\interp'}=t^{\interp^+}$.\footnote{Remember from preliminaries that $t^{\interp'}=t^{\interp}$ means ``$t^{\interp'_u}=t^{\interp_u}$ and $t^{\interp'_c}=t^{\interp_c}$ and $t^{\interp'_r}=t^{\interp_r}$''.}
\end{theorem}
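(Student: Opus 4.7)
The plan is to construct the merged interpretation $\interp'$ explicitly and then verify satisfaction by combining domain extensibility with a standard coincidence argument. First I would set $\Delta^{\interp'}:=\Delta^\interp\cup\Delta^{\interp^+}$, define $t^{\interp'}:=t^\interp$ for every $t\in\sig(O)$, $t^{\interp'}:=t^{\interp^+}$ for every $t\in\sig(O^+)$, and fix an arbitrary interpretation (e.g.\ empty concepts, empty roles, some pinned individual) for every other term. The hypothesis $\sig(O)\cap\sig(O^+)=\emptyset$ makes the first two clauses non-conflicting, and since $t^\interp$ (resp.\ $t^{\interp^+}$) already lies in $\Delta^\interp$ (resp.\ $\Delta^{\interp^+}$), every interpretation lands in $\Delta^{\interp'}$. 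Uniqueness on the combined signature with the prescribed domain is immediate from these defining equations.

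To prove $\interp'\models O$, I would introduce the auxiliary interpretation $\interp_1:=\langle\Delta^\interp\cup\Delta^{\interp^+},\cdot^\interp\rangle$, which is a model of $O$ by domain extensibility of $\interp$ (instantiating $\Delta^+$ with $\Delta^{\interp^+}$). The interpretations $\interp_1$ and $\interp'$ then share the same domain and agree on every term of $\sig(O)$, so by a structural induction on the syntax of concepts and roles one gets $C^{\interp_1}=C^{\interp'}$ and $R^{\interp_1}=R^{\interp'}$ for every concept and role built from $\sig(O)$. Thus the same $O$-axioms are satisfied in both, and in particular $\interp'\models O$. Swapping the roles of $\interp$ and $\interp^+$ in the same argument yields $\interp'\models O^+$, so $\interp'\models O\cup O^+$.

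The main obstacle lies in the inductive step of the coincidence argument, because several DL constructors are sensitive to the domain as a whole: $\top$, $\neg C$, $\neg R$ and $\forall R.C$ involve complementation with respect to the domain, and $R^+$ involves reflexive-transitive closure. These are exactly the cases that would fail when naively enlarging a model, and they are precisely what the domain-extensibility hypothesis has been designed to neutralize. Once the inductive cases are handled on the $\sig(O)$-fragment (and symmetrically on the $\sig(O^+)$-fragment), the rest of the argument is routine, so I expect this inductive verification to be essentially the only real work in the proof.
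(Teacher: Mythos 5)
Your construction is exactly the paper's: take the union of the domains, copy each signature's interpretation from its own model (disjointness of signatures guarantees no conflict), and invoke domain extensibility; the paper compresses your two-step verification (domain extensibility gives $\interp_1\models O$, then a coincidence argument transfers this to $\interp'$) into the single line ``by definition of domain extensibility,'' so your version just makes the implicit coincidence lemma explicit. One small quibble: in the coincidence step between $\interp_1$ and $\interp'$ the domains are already equal, so the domain-sensitive constructors ($\top$, $\neg$, $\forall$, $R^+$) pose no difficulty there --- they are the reason the domain-extensibility \emph{hypothesis} is needed, not an obstacle in the induction you describe.
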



\begin{proof}[\refthm{model.extensibility}]
Since $O^+$ is consistent, there exists a model $\interp^+$ of $O^+$ and \wolog, we assume that $\Delta^\interp\cap\Delta^{\interp^+}=\emptyset$. We define a new interpretation $\interp'$ where $\Delta^{\interp'}=\Delta^\interp\cup\Delta^+$; for all $t\in\sig(O)$, $t^{\interp'} = t^{\interp}$;
and for all $t\in\sig(O^+)$, $t^{\interp'} = t^{\interp^+}$.
By definition of domain extensibility, $\interp'\models O$ and $\interp'\models O^+$.
\hfill\qed
\end{proof}

From this theorem it follows that \Nd is sound if both the original ontology and the contextual annotation are model extensible. However, this is not a strong restriction, because the relativization of any consistent ontology is model extensible. 

\begin{theorem}[Model extensibility of relativized ontologies]\label{thm:model.extensibility.of.relativized.ontologies}
For any annotated ontology $O_\fixC = \langle O,\fixC\rangle$ where $O$ is consistent, if $\sig(O)\subseteq\nonctxtterms$ 
then $\rel_\C(O)$ is model extensible. and all its models are domain extensible.
\end{theorem}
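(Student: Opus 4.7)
The plan is to prove both halves of the statement separately: first construct an explicit domain-extensible model of $\rel_\fixC(O)$, and then strengthen this to show that every model of $\rel_\fixC(O)$ is already domain extensible (the latter implying the former, but the construction in the first part clarifies intuition).

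For the existence half, start from any model $\interp$ of $O$, which is guaranteed by consistency of $O$. Build $\interp^*$ by taking $\Delta^{\interp^*}=\Delta^{\interp}$, inheriting the interpretation of every term of $\sig(O)$ from $\interp$, and setting $\top_\fixC^{\interp^*}=\Delta^\interp$. Because $\top_\fixC^{\interp^*}$ then coincides with $\top^{\interp^*}$, each rewriting listed in items~\ref{itm:top}--\ref{itm:transclosure} is semantically inert, so every relativized axiom of $O$ holds in $\interp^*$ exactly because the original held in $\interp$. Moreover, the four extra axioms of item~\ref{itm:extra} hold trivially, since every term of $\sig(O)$ is interpreted inside $\Delta^\interp=\top_\fixC^{\interp^*}$.

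For the second half, pick an arbitrary model $\interp'$ of $\rel_\fixC(O)$ and an arbitrary set $\Delta^+$ disjoint from $\Delta^{\interp'}$, and form $\interp''$ with domain $\Delta^{\interp'}\cup\Delta^+$ and the same interpretation function. The engine of the proof is a joint structural induction on relativized concepts and roles showing the invariance lemma
\[\relf_\fixC(R)^{\interp''_r}=\relf_\fixC(R)^{\interp'_r}\subseteq \top_\fixC^{\interp'}\times\top_\fixC^{\interp'},\]
together with the analogous invariance for concepts, with the stronger equality $\relf_\fixC(C)^{\interp''_c}=\relf_\fixC(C)^{\interp'_c}\subseteq\top_\fixC^{\interp'}$ in those cases where the relativization supplies an outer $\top_\fixC$ guard. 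The base case for atomic terms rests precisely on the four extra axioms of item~\ref{itm:extra}, which force every atomic class and every endpoint of every atomic role into $\top_\fixC^{\interp'}$ and therefore exclude fresh elements of $\Delta^+$ from atomic interpretations. The inductive cases for $\sqcup$, $\sqcap$, $\circ$ and $(\cdot)^{-}$ are mechanical; the delicate cases $\neg$, $\forall R.C$ and $R^+$ work exactly because the relativization explicitly conjoins $\top_\fixC$ or $\top_\fixC\times\top_\fixC$, filtering out any element of $\Delta^+$ that could otherwise sneak in when the domain grows.

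The main obstacle is keeping the invariant tight through the constructs whose extension can grow with the domain, namely negation, universal restriction, transitive closure, and number restrictions: one must verify that the $\top_\fixC$-based conjunct inserted by the relativization, together with the inductive invariance of the underlying role (which by the role-half of the lemma still lives inside $\top_\fixC^{\interp'}\times\top_\fixC^{\interp'}$), really does keep the extension confined to $\Delta^{\interp'}$. Once this invariance lemma is dispatched, domain extensibility of $\interp''$ is immediate: every relativized axiom of $\rel_\fixC(O)$ holds in $\interp''$ because none of the interpretations involved has moved, and every extra axiom of item~\ref{itm:extra} holds because neither $\top_\fixC^{\interp''}$ nor any term's interpretation has changed. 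Since $\Delta^+$ was arbitrary, $\interp'$ is domain extensible, and since $\interp'$ was arbitrary, model extensibility of $\rel_\fixC(O)$ drops out as a corollary.
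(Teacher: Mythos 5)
Your proof follows essentially the same route as the paper's: build a model of $\rel_\fixC(O)$ from a model of $O$ by setting $\top_\fixC$ to the whole domain, then show that adding fresh elements preserves modelhood via a structural-induction invariance lemma whose base case rests on the four extra axioms and whose hard cases are discharged by the inserted $\top_\fixC$ guards. Your version is in one respect tidier: you run the induction on an \emph{arbitrary} model of $\rel_\fixC(O)$, which directly yields the second half of the statement (``all its models are domain extensible''), whereas the paper proves its auxiliary lemma only for the specifically constructed model and then asserts that the same argument transfers to arbitrary models.

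There is, however, one case where the induction you describe genuinely fails as the relativization is defined: $\relf_\fixC(\leq n\,R.C) = \;\leq n~\relf_\fixC(R).\relf_\fixC(C)$ carries \emph{no} $\top_\fixC$ conjunct, yet its extension grows with the domain --- a fresh element of $\Delta^+$ has no $\relf_\fixC(R)$-successors at all, hence at most $n$ of them in $\relf_\fixC(C)$, and so belongs to the at-most restriction. The invariance claim $\relf_\fixC(X)^{\interp''}=\relf_\fixC(X)^{\interp'}$ therefore breaks at this case (likewise for $\geq 0\,R.C$ if $n=0$ is admitted), and the theorem itself fails on inputs such as $O=\{\leq 1\,R.\top\sqsubseteq A\}$: in the extended model a fresh element satisfies the relativized left-hand side but not the right-hand side, which stays confined to $\top_\fixC$. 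You explicitly list number restrictions among the delicate cases and assert that the inserted guard confines them, but no guard is inserted for them; the repair is to add $\sqcap\,\top_\fixC$ to the at-most (and degenerate at-least) clauses of $\relf_\fixC$. To be fair, the paper's own proof has the identical blind spot: its auxiliary lemma is asserted for all concepts while the induction is omitted for space.
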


\begin{proof}[\refthm{model.extensibility.of.relativized.ontologies}]
Since $O$ is consistent, there exists a model $\interp_o = \langle\Delta^{\interp_o},\cdot^{\interp_o}\rangle$. We define an interpretation $\interp_{\mathsf{rel}} = \langle\Delta^{\interp_o},\cdot^{\interp_{\mathsf{rel}}}\rangle$ of $\rel_\fixC(O)$ where for all $t\in\sig(O)$, $t^{\interp_\mathsf{rel}}=t^{\interp_o}$ and $\top_\fixC^{\interp_\mathsf{rel}}=\Delta^{\interp_o}$.

First, we show that $\interp_{\mathsf{rel}}$ is a model of $\rel_\fixC(O)$. Indeed, it clearly satisfies all extra axioms from \refitm{extra} in \refdef{relativization}. Second, since $\top_\fixC^{\interp_{\mathsf{rel}}}=\top^{\interp_{\mathsf{rel}}}$, replacing $\top$ with $\top_\fixC$ or adding a conjunction with $\top_\fixC$ does not change the meaning of the formulas between $\interp_o$ and $\interp_{\mathsf{rel}}$.

Second, we need to show that any extension of $\interp_\mathsf{rel}$ is still a model for $\rel_\fixC(O)$, that is, $\interp'=\langle\Delta^{\interp_o}\cup\Delta^{\interp^+},\cdot^{\interp_{\mathsf{rel}}}\rangle\models \rel_\fixC(O)$ for any set $\Delta^{\interp^+}$ such that $\Delta^o\cap\Delta^{\interp^+}=\emptyset$. Since the interpretations of the terms are the same in $\interp_\mathsf{rel}$ and $\interp'$, it follows that $\interp'$ satisfies the extra axioms in \refitm{extra}. We now need to prove that all other axioms of $\rel_\fixC(O)$ are satisfied, which we can do with a proof by induction taking advantage of the recursive definition of $\relf_\fixC$. However, we need to prove an auxiliary lemma.

\begin{lemma}\label{lem:aux}
For all concepts or roles $X$, $X^{\interp_o} = \relf_\fixC(X)^{\interp'}$.
\end{lemma}

This lemma can be proved by structural induction on the concepts and roles. Because of space restriction, we do not provide a complete proof but remark that every time a construct may lead to a different interpretation due to increasing the domain of interpretation (\eg by using the concept $\top$ or a negation), the function $\relf_\fixC$ adds a conjunction with $\top_\fixC$, so that the interpretations of the relativized concepts and roles stay confined in the original domain of interpretation.

From this lemma, it follows that if $\interp_o\models\alpha$ then $\interp'\models\rel_\fixC(\alpha)$, where $\alpha$ is any axiom with $\sig(\alpha)\subseteq\sig(O)$. Consequently, $\interp'\models\rel_\fixC(O)$.

In order to prove that all models of $\rel_\fixC(O)$ are domain extensible, we consider an arbitrary model $\interp$ of said ontology and apply the same construction that lead to $\interp'$ and use the same arguments to show that it is still a model of $\rel_\fixC(O)$.  \hfill\qed
\end{proof}

With \refthm{model.extensibility.of.relativized.ontologies} proven, we can now proceed to prove the main \refthm{soundness.of.nd}.

\begin{proof}[\refthm{soundness.of.nd}]
Let us assume that $\fixC$ is model extensible. Let $O$ be a consistent ontology such that $\sig(O)\cap\sig(\fixC)=\emptyset$ and $\sig(O)\subseteq\nonctxtterms$.

Since $O$ and $\fixC$ are consistent, there exist two models $\interp_o = \langle\Delta^{\interp_o},\cdot^{\interp_o}\rangle$ and $\langle\Delta^{\interp_c},\cdot^{\interp_c}\rangle$ of $O$ and $C$ respectively. \Wolog, we can assume that $\Delta^{\interp_o}\cap\Delta^{\interp_c}=\emptyset$.

First, from the proof of \refthm{model.extensibility.of.relativized.ontologies}, we know that from a model of $O$, we can define a model $\interp_\rel$ of $\rel_\fixC(O)$ having the same domain as $\interp_o$. Then, $\ren_\fixC(\rel_\fixC(O))$ is just a renaming of the terms in $\rel_\fixC(O)$, so a new interpretation that maps the renamed terms to the original interpretation of the original terms satisfies the renamed ontology.\footnote{Since ``truth is invariant under change of notation''~\cite[Section~2, p.7]{Goguen_Burstall:92}.}

Due to \refthm{model.extensibility.of.relativized.ontologies}, $\interp_\rel$ can be extended to include any additional elements in its domain while remaining a model of $\ren_\fixC(\rel_\fixC(O))$.

Similarly, $\ctxtfunc(\fixC)$ is like $\fixC$ with a renamed anchor so an interpretation $\interp_\rep$ that interprets $\rep(\fixC)$ as $\anchor^{\interp_c}$ and coincides with $\interp_c$ on all other terms must be a model of $\ctxtfunc_\fixC$. Moreover, we assumed that $\fixC$ is model extensible so that $\interp_\rep$ can be extended to include any additional elements.

We thus define an interpretation $\interp' = \langle\Delta^{\interp_o}\cup\Delta^{\interp_c},\cdot^{\interp'}\rangle$ that extends both $\interp_\rel$ and $\interp_\rep$ such that:
\begin{itemize}
 \item for all $t\in\sig(\fixC)$, $t^{\interp'} = t^{\interp_c}$;
 \item for all $t\in\sig(O)$, $\ren_\fixC(t)^{\interp'} = t^{\interp_o}$, and $t^{\interp'}$ is an arbitrary part of $\Delta^{\interp_o}\cup\Delta^{\interp_c}$;
 \item $\rep(\fixC)^{\interp'} = \anchor^{\interp_c}$;
 \item $\icpo^{\interp'_r}=\{\langle\ren_\fixC(t)^{\interp'_u},t^{\interp'_u}\rangle\mid t\in\sig(O)\}$;
 \item $\iic^{\interp'_r}=\{\langle\ren_\fixC(t)^{\interp'_u},\rep(\fixC)^{\interp'_u}\rangle\mid t\in\sig(O)\}$.
\end{itemize}

Let us prove that this interpretation is a model of $\cXVn_\mathsf{nd}(O_\fixC)$. 
Due to the domain extensibility of $\interp_\rel$ and $\interp_\rep$, $\interp'$ remains a model of $\ren_\fixC(\rel_\fixC(O))$ and of $\ctxtfunc(\fixC)$

Additionally, the axioms $\icpo(\ren_\fixC(t),t)$ and $\iic(\ren_\fixC(t),\rep(\fixC))$ are satisfied for all $t\in\sig(\alpha)$ by definition of $\icpo^{\interp'_r}$ and $\iic^{\interp'_r}$.\hfill\qed
\end{proof}


\subsection{Inconsistency Preservation}\label{sec:inconsistency.preservation}

In this section we prove that \Nd preserves inconsistencies. As in \refsec{soundness.of.nd}, we fix a contextual annotation $\fixC$ that has its signature in $\nonctxtterms$.

\begin{theorem}[Inconsistency preservation of \Nd]\label{thm:inconsistency.preservation.of.nd}
The contextualization function $\cXVn_\mathsf{nd}$ preserves inconsistencies. 
\end{theorem}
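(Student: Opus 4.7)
The plan is to prove the contrapositive: if $\cXVn_\mathsf{nd}(O_\fixC)$ is consistent, then $O$ is consistent. Assume that $\interp'$ is a model of $\cXVn_\mathsf{nd}(O_\fixC)$. Since $\ren_\fixC(\rel_\fixC(O)) \subseteq \cXVn_\mathsf{nd}(O_\fixC)$, the interpretation $\interp'$ satisfies in particular every renamed, relativized axiom obtained from $O$. The strategy is then (i) to undo the renaming to obtain a model of $\rel_\fixC(O)$, and (ii) to restrict the domain to $\top_\fixC^{\interp'}$ to obtain a model of $O$ itself.

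First, I would define $\interp''$ on the domain $\Delta^{\interp'}$ by setting $t^{\interp''} = \ren_\fixC(t)^{\interp'}$ for every $t\in\sig(O)$, and extending arbitrarily on other symbols. Since $\ren_\fixC$ is injective and $\sig(O)\cap\sig(\fixC)=\emptyset$, this is well-defined, and since truth is invariant under consistent renaming of the signature, $\interp''\models\rel_\fixC(O)$. Next, I would define $\interp'''$ by restricting the domain to $\Delta^{\interp'''} = \top_\fixC^{\interp''_c}$, and interpreting each term of $\sig(O)$ exactly as in $\interp''$, which is legitimate because the four extra axioms of \refitm{extra} in \refdef{relativization} guarantee that $t^{\interp''_u}\in\top_\fixC^{\interp''_c}$, $t^{\interp''_c}\subseteq\top_\fixC^{\interp''_c}$, and $t^{\interp''_r}\subseteq\top_\fixC^{\interp''_c}\times\top_\fixC^{\interp''_c}$, so no interpretation escapes the restricted domain.

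The crux of the argument is then a structural induction, dual in spirit to \reflem{aux}, showing that for every concept $C$ and every role $R$ built over $\sig(O)$,
\[ C^{\interp'''_c} = \relf_\fixC(C)^{\interp''_c} \text{ and } R^{\interp'''_r} = \relf_\fixC(R)^{\interp''_r}. \]
The base case on terms is immediate from the construction of $\interp'''$. The delicate inductive cases are exactly those for which $\relf_\fixC$ inserts a conjunction with $\top_\fixC$ (or $\top_\fixC\times\top_\fixC$): namely $\top$, $\neg C$, $\forall R.C$, $\neg R$, and $R^+$. In each of these cases, the relativization traps the interpretation inside $\top_\fixC^{\interp''}$, which is exactly the restricted domain of $\interp'''$, so the induction hypothesis yields equality. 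From this equality one concludes that $\interp'''\models \alpha$ for every $\alpha\in O$, since $\interp''\models\rel_\fixC(\alpha)$. Hence $\interp'''$ is a model of $O$, contradicting its inconsistency.

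The main obstacle is carrying out the structural induction cleanly for the cases involving quantification and negation, where one must argue that quantifying over $\Delta^{\interp'''}$ coincides with quantifying over $\Delta^{\interp''}$ intersected with $\top_\fixC^{\interp''}$. This is precisely what the relativization was designed to ensure, but the verification is tedious across all constructors. A secondary subtlety is the transitive closure: one must observe that intersecting $R^+$ with $\top_\fixC\times\top_\fixC$ and then computing the closure agrees, on the restricted domain, with the native closure of $R^{\interp'''_r}$; this holds because all witnesses in any $R$-chain already lie in $\top_\fixC^{\interp''}$ by the axioms of \refitm{extra}.
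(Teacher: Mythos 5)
Your proposal is correct and follows essentially the same route as the paper's own proof: contraposition, undoing the injective renaming $\ren_\fixC$ to recover a model of $\rel_\fixC(O)$, restricting the domain to $\top_\fixC$ (justified by the axioms of \refitm{extra}), and a structural induction over $\relf_\fixC$ to conclude the restricted interpretation models $O$. The only cosmetic difference is that you collapse into one equality what the paper states as two successive ones ($\relf_\fixC(X)^{\interp''}=\relf_\fixC(X)^{\interp'}$ followed by $X^{\interp''}=\relf_\fixC(X)^{\interp''}$), which is equivalent.
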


\begin{proof}[\refthm{inconsistency.preservation.of.nd}]
We prove the theorem by contraposition, that is, if $\cXVn(O_\fixC)$ is consistent, then $O$ is consistent.

Let $O_\fixC = \langle O,\fixC\rangle$ be an annotated ontology. Let us assume that $\cXVn(O_\fixC)$ is consistent. There exists a model $\interp = \model$ that satisfies $\ren_\fixC(\rel_\fixC(O))$. By definition, $\ren_\fixC$ is injective so there exists an inverse function $\ren_\fixC^-:\ren_\fixC(\nonctxtterms)\to\nonctxtterms$ which is itself injective, so that renaming terms $t\in\sig(\ren_\fixC(\rel_\fixC(O)))\cap\ren(\nonctxtterms)$ by $\ren_\fixC^-(t)$ gives us a model $\interp'=\langle\Delta^\interp,\cdot^{\interp'}\rangle$ of ontology $\rel_\fixC(O)$ where $\ren_\fixC^-(t)^{\interp'}=t^\interp$ for all $t\in\sig(\ren_\fixC(\rel_\fixC(O)))$.

From $\interp'$, we can define another interpretation $\interp''=\langle\top_\fixC^\interp,\cdot^{\interp'}\rangle$. For this to be a valid interpretation, all individuals must be interpreted as elements of $\top_\fixC^\interp$, all concepts must interpreted as subsets of $\top_\fixC^\interp$, and all roles must be interpreted as subsets of $\top_\fixC^\interp\times\top_\fixC^\interp$. But this is necessarily the case because the interpretation function is that of $\interp'$, and $\interp'$ is a model of $\rel_\fixC(O)$ where the axioms of \refitm{extra} in \refdef{relativization} guarantee this property.

Now, we prove that $\interp''\models \rel_\fixC(O)$. Indeed, as explained before, axioms added by \refitm{extra} are satisfied. Then, it can be checked by structural induction that for all concepts or roles $X$, $\relf_\fixC(X)^{\interp''}=\relf_\fixC(X)^{\interp'}$. Moreover, all remaining axioms in $\rel_\fixC(O)$ (other than those from $\refitm{extra}$) are the result of applying the relativization function on concepts and roles. Thus, all axioms of $\rel_\fixC(O)$ are of one of the following forms: $\relf_\fixC(X)\sqsubseteq\relf_\fixC(Y)$, $\relf_\fixC(C)(x)$, or $\relf_\fixC(R)(x,y)$, for $X, Y$ two concepts or two roles, $C$ a concept, $R$ a role, and $x, y$ two individuals. Since $\interp'$ is a model of $\rel_\fixC(O)$, if $C\sqsubseteq D\in\ax(O)$, then $\interp'\models\relf_\fixC(C)\sqsubseteq\relf_\fixC(D)$. This, combined with the equality of $\relf_\fixC(X)^{\interp''}$ and $\relf_\fixC(X)^{\interp'}$ for any concept or role $X$, ensures that $\interp''\models\relf_\fixC(C)\sqsubseteq\relf_\fixC(D)$. The same line of reasoning holds for $C(x)\in\ax(O)$ and $R(x,y)\in\ax(O)$. It follows that $\interp''\models\rel_\fixC(O)$.

We finish the proof by showing that $\interp''\models O$, which proves that $O$ is consistent. Since $\top^{\interp''}=\top_\fixC^{\interp''}$, adding $\sqcap\top_\fixC$ to a concept does not change its interpretation, and replacing $\top$ with $\top_\fixC$ has no effect on the interpretation of the concepts or roles. Consequently, for all concepts or roles $X$, $X^{\interp''}=\relf_\fixC(X)^{\interp''}$. Thus, when $\interp''\models\rel_\fixC(\alpha)$ it also satisfies $\alpha$ and therefore, $\interp''\models O$.\hfill\qed
\end{proof}


\subsection{Inference Preservation}\label{sec:inference.preservation}

In~\cite{GimenezGarcia_Zimmermann_Maret:17}, we were only able to study entailment preservation in the limited setting of $pD*$ entailment. Here we prove a much stronger theorem for \Nd. As in the two previous subsections, we fix a contextual annotation $\fixC$ that has its signature in $\nonctxtterms$.

\begin{theorem}[Entailment preservation of \Nd]\label{thm:entailment.preservation.of.nd}
Let $\Omega$ be a set of ontologies having their signatures in $\nonctxtterms$ and disjoint from the signature of $\fixC$. If $\fixC$ is model extensible, then \Nd is entailment preserving for $\{\langle O,\fixC\rangle\}_{O\in\Omega}$.
\end{theorem}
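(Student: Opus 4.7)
My plan is to take an arbitrary pair $O_1, O_2 \in \Omega$ with $O_1 \models O_2$ and an arbitrary model $\interp$ of $\cXVn_\mathsf{nd}(\langle O_1, \fixC\rangle)$, and to show $\interp \models \cXVn_\mathsf{nd}(\langle O_2, \fixC\rangle)$. I would first decompose the target contextualization into three groups of axioms: the common part $\ctxtfunc(\fixC)$, which is identical in both contextualizations and therefore automatically preserved; the $\icpo$ and $\iic$ axioms indexed by the signature; and the renamed-relativized axioms $\ren_\fixC(\rel_\fixC(O_2))$. The bulk of the argument concerns the third group.

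The key move is to reuse the inner-model construction from the proof of \refthm{inconsistency.preservation.of.nd}. Starting from $\interp$, I would un-rename terms via $\ren_\fixC^{-1}$ to obtain an interpretation $\interp'$ of $\rel_\fixC(O_1)$ over the same domain, and then restrict the domain to $\top_\fixC^\interp$ to obtain $\interp^\downarrow$. The axioms of \refitm{extra} in \refdef{relativization}, present in $\cXVn_\mathsf{nd}(\langle O_1, \fixC\rangle)$ for every $t \in \sig(O_1)$, guarantee that all necessary interpretations already lie inside $\top_\fixC^\interp$, so this restriction is well-defined. By the argument of the inconsistency-preservation proof, $\interp^\downarrow \models \rel_\fixC(O_1)$, and since $\top^{\interp^\downarrow} = \top_\fixC^{\interp^\downarrow}$ the structural induction dual to Lemma~\ref{lem:aux} gives $\interp^\downarrow \models O_1$. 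The hypothesis $O_1 \models O_2$ then yields $\interp^\downarrow \models O_2$, and because every relativization clause is vacuous inside a domain where $\top$ coincides with $\top_\fixC$, one also has $\interp^\downarrow \models \rel_\fixC(O_2)$.

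To lift this back to $\interp$, I would invoke Lemma~\ref{lem:aux} together with the model extensibility of $\fixC$ (used exactly as in the proof of \refthm{soundness.of.nd}) to show that viewing $\interp$ as an extension of $\interp^\downarrow$ along $\Delta^\interp \supseteq \top_\fixC^\interp$ preserves the satisfaction of $\rel_\fixC(O_2)$; composing this with the invariance of truth under renaming then delivers $\interp \models \ren_\fixC(\rel_\fixC(O_2))$, as required.

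The main obstacle I anticipate is the treatment of terms $t \in \sig(O_2) \setminus \sig(O_1)$: for such a $t$, neither $\icpo(\ren_\fixC(t), t)$ nor $\iic(\ren_\fixC(t), \rep(\fixC))$ nor the \refitm{extra} constraint $\top_\fixC(\ren_\fixC(t))$ is forced by $\cXVn_\mathsf{nd}(\langle O_1, \fixC\rangle)$, so an arbitrary model of the source can falsify them. I would handle this by restricting attention to the intended reading in which $\sig(O_2) \subseteq \sig(O_1)$, i.e.\ $O_2$ only speaks of vocabulary already present in $O_1$; under this restriction all required $\icpo$, $\iic$, and relativization-extra axioms of $\cXVn_\mathsf{nd}(\langle O_2, \fixC\rangle)$ are literally contained in $\cXVn_\mathsf{nd}(\langle O_1, \fixC\rangle)$ and therefore satisfied by $\interp$, completing the verification.
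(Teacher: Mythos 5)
Your proposal is correct and follows essentially the same route as the paper: the paper factors your middle two paragraphs into Lemma~\ref{lem:preservation.of.entailments.in.rel} (relativization preserves entailment, proved by restricting the domain to $\top_\fixC^\interp$, recovering a model of $O_1$, applying $O_1\models O_2$, and extending back via \refthm{model.extensibility.of.relativized.ontologies}), and it resolves the $\icpo$/$\iic$/signature issue exactly as you do, by taking $\sig(O_2)\subseteq\sig(O_1)$. The only nitpick is that the lifting step back to $\interp$ rests on the domain extensibility of models of $\rel_\fixC(O_2)$ (\refthm{model.extensibility.of.relativized.ontologies}), not on the model extensibility of $\fixC$.
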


In order to prove this theorem, we must show that relativization preserves entailments.

\begin{lemma}\label{lem:preservation.of.entailments.in.rel}
Let $O_1$ and $O_2$ be two ontologies. If $O_1\models O_2$ then $\rel_\fixC(O_1)\models\rel_\fixC(O_2)$.
\end{lemma}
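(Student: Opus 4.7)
The plan is to prove the lemma by taking an arbitrary model $\interp$ of $\rel_\fixC(O_1)$ and showing that it also satisfies $\rel_\fixC(O_2)$, using a restriction-to-$\top_\fixC$ construction very similar to the one used in the proof of \refthm{inconsistency.preservation.of.nd}. The underlying intuition is that $\rel_\fixC$ is engineered precisely so that a model of $\rel_\fixC(O)$ corresponds bijectively to a model of $O$ obtained by restricting the domain to $\top_\fixC^\interp$; entailment is then transported along this correspondence.

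First, I would define $\interp^* = \langle \top_\fixC^\interp, \cdot^{\interp^*}\rangle$ where, for every $t \in \sig(O_1)$, $t^{\interp^*} = t^\interp$. This is a well-formed interpretation: the extra axioms of \refitm{extra} in \refdef{relativization} guarantee $t^{\interp_u} \in \top_\fixC^\interp$, $t^{\interp_c} \subseteq \top_\fixC^\interp$, and $t^{\interp_r} \subseteq \top_\fixC^\interp \times \top_\fixC^\interp$. For any term of $\sig(O_2) \setminus \sig(O_1)$ that may appear in $O_2$, extend $\interp^*$ arbitrarily within $\top_\fixC^\interp$; such terms would otherwise render $O_1 \models O_2$ trivial in the sense that the truth of the relevant axioms of $O_2$ must hold independently of their interpretation.

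The heart of the argument is the following structural lemma, proven by induction on the construction of concepts and roles: for every concept or role $X$ built from $\sig(O_1)$, $X^{\interp^*} = \relf_\fixC(X)^\interp$. The base case $\relf_\fixC(t) = t$ is immediate. The nontrivial inductive cases are exactly those where \refdef{reativization.of.concepts.and.roles} adds a conjunction with $\top_\fixC$ (or $\top_\fixC \times \top_\fixC$): negation of concepts, negation of roles, universal restriction $\forall R.C$, transitive closure $R^+$, and the implicit replacement of $\top$ by $\top_\fixC$. In each of these, the added $\sqcap \top_\fixC$ cuts off precisely those elements of $\Delta^\interp$ that fall outside $\top_\fixC^\interp$, yielding the same set as computing the unrelativized construct inside the restricted interpretation $\interp^*$. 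The remaining cases (conjunction, disjunction, composition, existential, number restrictions, nominals, inverse, concept product) follow from the induction hypothesis by direct inspection.

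Given this lemma, an axiom $\alpha \in O_1$ is of one of the forms $C \sqsubseteq D$, $R \sqsubseteq S$, $C(a)$, or $R(a,b)$; in each case $\interp \models \rel_\fixC(\alpha)$ translates directly into $\interp^* \models \alpha$. Therefore $\interp^* \models O_1$, and by $O_1 \models O_2$ we get $\interp^* \models O_2$. Applying the lemma in the reverse direction on axioms of $O_2$ gives $\interp \models \rel_\fixC(\alpha)$ for every $\alpha \in O_2$, and the extra axioms of \refitm{extra} for $\sig(O_2)$ either hold because the term was already constrained by $\rel_\fixC(O_1)$ or because our choice of $\interp^*$ kept its interpretation inside $\top_\fixC^\interp$, which by definition of $\interp^*$ agrees with $\interp$ on $\sig(O_1)$. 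Thus $\interp \models \rel_\fixC(O_2)$. The principal obstacle is the structural induction verifying $X^{\interp^*} = \relf_\fixC(X)^\interp$ uniformly across every concept and role constructor, since each constructor must be checked to ensure that the relativization has placed a $\top_\fixC$ restriction exactly where the domain restriction could change the interpretation.
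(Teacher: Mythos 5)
Your proof is correct and rests on the same core idea as the paper's: restrict the model to $\top_\fixC^\interp$, where relativization becomes transparent, and transport the entailment $O_1\models O_2$ through that restriction. The difference is in the packaging. The paper proceeds in two stages: it first shows (reusing the argument from the proof of \refthm{inconsistency.preservation.of.nd}) that the restricted interpretation still models $\rel_\fixC(O_1)$ and that \emph{inside} the restricted interpretation each construct coincides with its relativized counterpart, i.e.\ $X^{\interp'}=\relf_\fixC(X)^{\interp'}$, which gives $\interp'\models O_1$, hence $\interp'\models O_2$ and $\interp'\models\rel_\fixC(O_2)$; it then invokes \refthm{model.extensibility.of.relativized.ontologies} to extend $\interp'$ back to the original domain and conclude $\interp\models\rel_\fixC(O_2)$. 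You instead prove a single cross-interpretation identity $X^{\interp^*}=\relf_\fixC(X)^{\interp}$, which yields $\interp\models\rel_\fixC(\alpha)\Leftrightarrow\interp^*\models\alpha$ in one stroke and therefore handles both the descent to $O_1$ and the ascent back to $\rel_\fixC(O_2)$ with no appeal to domain extensibility; this is slightly more self-contained, whereas the paper's route reuses machinery it has already built. One shared loose end: if $\sig(O_2)\not\subseteq\sig(O_1)$, the extra axioms of \refitm{extra} for the fresh terms of $O_2$ are not forced on an arbitrary model of $\rel_\fixC(O_1)$, and your justification for that case does not actually work (adjusting $\interp^*$ cannot help, since it is $\interp$ itself that must satisfy $\rel_\fixC(O_2)$); however, the paper silently makes the same signature-inclusion assumption in the proof of \refthm{entailment.preservation.of.nd}, so this is not a gap relative to the standard set by its own proof.
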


\begin{proof}[\reflem{preservation.of.entailments.in.rel}]
Let $O_1$ and $O_2$ such that $O_1\models O_2$. If $\rel_\fixC(O_1)$ is inconsistent, then the property is obviously verified. Otherwise, there exists a model $\interp = \model$ of $\rel_\fixC(O_1)$. Using the same arguments as in the proof of \refthm{inconsistency.preservation.of.nd}, $\interp'=\langle\top_\fixC,\cdot^\interp\rangle$ is also a model of $\rel_\fixC(O_1)$ and $\interp'$ interprets concepts and roles equally to their relativized counterparts, \ie $X^{\interp'}=\relf_\fixC(X)^{\interp'}$. It follows that $\interp'\models O_1$ and since $O_1\models O_2$, $\interp'\models O_2$, and $\interp'\models \rel_\fixC(O_2)$. By \refthm{model.extensibility.of.relativized.ontologies}, we know that $\interp'$ is domain extensible for $\rel_\fixC(O_2)$. Consequently, $\interp$ is also a model of $\rel_\fixC(O_2)$.\hfill\qed
\end{proof}

With the lemma proven, we then prove \refthm{entailment.preservation.of.nd}.

\begin{proof}[\refthm{entailment.preservation.of.nd}]
Let us assume that $\fixC$ is model extensible. Let $O_1$ and $O_2$ be two ontologies having their signatures in $\nonctxtterms$ and disjoint from $\sig(\fixC)$, such that $O_1\models O_2$.
If $\cXVn(\langle O_1,\fixC\rangle)$ is inconsistent, then the entailment is trivially preserved. Otherwise, there exists a model $\interp = \model$ of $\cXVn(\langle O_1,\fixC\rangle)$. By definition of \cXVn, it satisfies $\ren_\fixC(\rel_\fixC(O_1)$. Due to \reflem{preservation.of.entailments.in.rel}, $O_1\models O_2$ implies that $\rel_\fixC(O_1) \models \rel_\fixC(O_2)$. Moreover, using the fact that $\ren_\fixC$ is just a renaming of terms, and considering that ``truth is invariant under change of notation''~\cite{Goguen_Burstall:92}, we know that $\ren_\fixC(O)\models\ren_\fixC(O')$ is equivalent to $O\models O'$. Therefore, $\ren_\fixC(\rel_\fixC(O_1) \models \ren_\fixC(\rel_\fixC(O_2)$. Moreover, all axioms in $\ctxtfunc(\fixC)$ are satisfied by $\interp$. Finally, for $O_2$ to be entailed by $O_1$, the signature of $O_2$ must be included in the signature of $O_1$. As a result, whenever $\icpo(\ren_\fixC,t)$ or $\iic(\ren_\fixC(t),\rep(\fixC))$ are in $\cXVn(\langle O_2,\fixC\rangle)$, then they are in $\cXVn(\langle O_1,\fixC\rangle)$ as well. Consequently, $\cXVn(\langle O_1,\fixC\rangle)\models\cXVn(\langle O_2,\fixC\rangle)$.\hfill\qed
\end{proof}


\section{Annotations in Multiple Contexts} \label{sec:multiple.domains}

So far, we assumed that all axioms of an ontology are annotated with the same contextual information. In this setting, the core of the contextualization function in \Nd amounts to relativizing the axioms and renaming the terms. The renaming part may seem surprising because, as we said a couple of times already, ``truth is invariant under change of notation''. However, the usefulness of the renaming part becomes apparent when we want to combine several annotated ontologies having different contextual annotations (say $\C_1$ and $\C_2$). In this case, if the renaming functions $\ren_{\C_1}$ and $\ren_{\C_2}$ are mapping non contextual terms into disjoint sets of contextual terms, then the contextualization function $\cXVn_\mathsf{nd}$ ensures that any inference made in a context will not interact with the knowledge from another context. This avoids the contextualized knowledge to be inconsistent when combining statements in different contexts that contradict each others.


The properties presented in \refsec{context} require a little adaptation when applied to the multi-contextual setting. Indeed, in spite of the soundness theorem of \refsec{soundness.of.nd}, in the general case if a set of annotated ontologies $\{\langle O_i,\C_i\rangle\}_{i\in I}$ are satisfying the constraints of \refthm{soundness.of.nd}, it is still possible that $\bigcup_{i\in I}\cXVn_\mathsf{nd}(\langle O_i,\C_i\rangle)$ is inconsistent. We expect that the preservation of consistency can be guaranteed if \emph{all} the signatures of $\{O_i\}$ are disjoint from \emph{all} the signatures of $\{\C_i\}$. Studying in more details the case of multiple contextual annotations is planned for future work.

\section{Other Approaches}\label{sec:other.approaches}

Here we briefly present the most relevant reification approaches in the Semantic Web. For all of them, the contextualization only annotates the role assertions, leaving other axioms unmodified. 

As seen in \refex{contextualization}, \emph{RDF reification} replaces $\alpha = R(x,y)$ with three new role assertions $\ct{subject}(\anchor_\alpha^\C,x)$, $\ct{predicate}(\anchor_\alpha^\C,R)$, and $\ct{object}(\anchor_\alpha^\C,y)$ and the axioms in the contextual annotation are anchored on $\anchor_\alpha^\C$, which depends on the role assertion $R$ and the contextual annotation $\C$. As shown in \refex{inconsistency.preservation} and \ref{ex:entailment.preservation}, this contextualization method preserves neither inconsistencies (in the sense of \refdef{inconsistency.preservation}) nor entailments on role assertions.

\begin{example}
An \emph{RDF reification} contextualization of our running example within the context $\Cex$ of \refex{contextual.annotation} contains the following axioms:

$\ct{subject}(\ct{stbcobe},\ct{babylon})$

$\ct{predicate}(\ct{stbcobe},\ct{capital})$

$\ct{object}(\ct{stbcobe},\ct{babylonianEmpire})$

$\ct{validity}(\ct{stbcobe},\ct{t})$

$\ct{Interval}(\ct{t})$

$\ct{from}(\ct{t},\ct{609BC})$

$\ct{to}(\ct{t},\ct{539BC})$

$\ct{prov}(\ct{stbcobe,\ct{w}})$

$\ct{name}(\ct{w},\ct{wikipedia})$

$\ct{Wiki}(\ct{w})$
\end{example}

\emph{N-Ary relations} replaces $R(x,y)$ by two role assertions $p_1(R)(x,\anchor_\alpha^\C)$ and $p_2(R)(\anchor_\alpha^\C,y)$, where $R$ is a simple role assertion, and $p_1$ and $p_2$ are two injective functions with disjoint ranges that map non-contextual roles to contextual roles. Alternatively, a new concept $C_R$ is added for the role $R$, and the following assertions are added: $C_R(\anchor_\alpha^\C)$, $p_1(R)(\anchor_\alpha^\C,x)$, and $p_2(R)(\anchor_\alpha^\C,y)$.

\begin{example}
An \emph{N-Ary relations} contextualization of our running example within the context $\Cex$ of \refex{contextual.annotation} contains the following axioms:

$\ct{capitalOf\#1}(\ct{babylon},\ct{rbcobe})$

$\ct{capitalOf\#2}(\ct{rbcobe},\ct{babylonianEmpire})$

$\ct{validity}(\ct{rbcobe},\ct{t})$

$\ct{Interval}(\ct{t})$

$\ct{from}(\ct{t},\ct{609BC})$

$\ct{to}(\ct{t},\ct{539BC})$

$\ct{prov}(\ct{rbcobe,\ct{w}}$

$\ct{name}(\ct{w},\ct{wikipedia})$

$\ct{Wiki}(\ct{w})$
\end{example}

\emph{Singleton property} is using a non-standard semantics of RDF but the same idea can be simulated with DL axioms. For each simple role axiom $R(x,y)$, the following axioms are added: $\anchor_\alpha^\C(x,y)$ (that is, the term for the anchor is used as a role), $\{x\}\equiv\exists\anchor_\alpha^\C.\{y\}$ (which guarantees that the anchor property is a singleton), and $\ct{singletonPropertyOf}(\anchor_\alpha^\C,R)$.

\begin{example}
A \emph{Singleton Property} contextualization of our running example within the context $\Cex$ of \refex{contextual.annotation} contains the following axioms:

$\ct{capital\#1}(\ct{babylon},\ct{babylonianEmpire})$

$\ct{singletonPropertyOf}(\ct{capital\#1},\ct{capital})$

$\ct{validity}(\ct{capital\#1},\ct{t})$

$\ct{Interval}(\ct{t})$

$\ct{from}(\ct{t},\ct{609BC})$

$\ct{to}(\ct{t},\ct{539BC})$

$\ct{prov}(\ct{capital\#1,\ct{w}}$

$\ct{name}(\ct{w},\ct{wikipedia})$

$\ct{Wiki}(\ct{w})$
\end{example}

The remaining approach, \emph{NdFluents}, uses a similar approach as \Nd except that it only renames the terms used as individuals and does not relativize the ontology. This ensures interesting properties \wrt entailment preservation~\cite{GimenezGarcia_Zimmermann_Maret:17}, but TBox axioms in different contexts are not distinguishable.

\section{Discussion and Future Work} \label{sec:discussion} 


\Nd and NdFluents are a concrete instantiations of a general approach of contextualizing (parts of) the terms in the ontology. Other instantiations would be possible, such as contextualizing role names (in a similar fashion as the singleton property), class names, or a combination of them. Then, \Nd would be the approach where each and every term is contextualized, while in NdFluents only individuals are.

In the future, we would like to deepen the analysis of contextualization, filling gaps still present in this preliminary work. Especially, the combination of multiple annotations, or annotations of contextualized ontologies, present some interesting challenges. A more systematic comparison of the various approaches remains to be presented.

\paragraph{Acknowledgement:} This work was partly funded by project WDAqua (H2020-MSCA-ITN-2014 \#64279) and project OpenSensingCity (ANR-14-CE24-0029).

\bibliographystyle{splncs}
\bibliography{biblioaz}

\begin{thebibliography}{10}

\bibitem{McCarthy:87}
McCarthy, J.L.:
\newblock {Generality in Artificial Intelligence}.
\newblock {Communications of the ACM} \textbf{30}(12) (1987)  1029--1035

\bibitem{Giunchiglia:93}
Giunchiglia, F.:
\newblock {Contextual Reasoning}.
\newblock Epistemologica \textbf{16} (1993)  345--364

\bibitem{Ghidini_Serafini:00}
Ghidini, C., Serafini, L.:
\newblock {Distributed First Order Logics}.
\newblock In Gabbay, D.M., de~Rijke, M., eds.: Frontiers of Combining Systems
  2. Volume~7 of {Studies in Logic and Computation}., {Research Studies Press}
  (2000)  121--139

\bibitem{Zimmermann:13}
Zimmermann, A.:
\newblock {Logical formalisms for Agreement Technologies}.
\newblock In Ossowski, S., ed.: {Agreement Technologies}.
\newblock {Springer} (2013)

\bibitem{Kifer_Subrahmanian:92}
Kifer, M., Subrahmanian, V.S.:
\newblock Theory of generalized annotated logic programming and its
  applications.
\newblock {Journal of Logic Programming} \textbf{12}(3{\&}4) (1992)  335--367

\bibitem{Udrea_Reforgiato_Subrahmanian:10}
Udrea, O., Recupero, D.R., Subrahmanian, V.S.:
\newblock {Annotated RDF}.
\newblock {ACM Transaction on Computational Logics} \textbf{11}(2) (2010)

\bibitem{Zimmermann_Lopes_Polleres_Straccia:12}
Zimmermann, A., Lopes, N., Polleres, A., Straccia, U.:
\newblock {A general framework for representing, reasoning and querying with
  annotated Semantic Web data}.
\newblock {Journal of Web Semantics} \textbf{11} (2012)  72--95

\bibitem{Carroll_Bizer_Hayes_Stickler:05}
Carroll, J.J., Bizer, C., Hayes, P., Stickler, P.:
\newblock {Named graphs, provenance and trust}.
\newblock In Ellis, A., Hagino, T., eds.: {Proceedings of the 14th
  International Conference on World Wide Web, WWW 2005, Chiba, Japan, May
  10-14, 2005}, {ACM Press} (May 2007)  613--622

\bibitem{Dividino_Sizov_Staab_Schueler:09}
Dividino, R.Q., Sizof, S., Staab, S., Schueler, B.:
\newblock {Querying for provenance, trust, uncertainty and other meta knowledge
  in RDF}.
\newblock {Journal of Web Semantics} \textbf{7}(3) (2009)  204--219

\bibitem{Hartig_Thompson:14}
Hartig, O., Thompson, B.:
\newblock {Foundations of an Alternative Approach to Reification in RDF}.
\newblock CoRR \textbf{abs/1406.3399} (2014)

\bibitem{Hoffart_Suchanek_Berbeich_Weikum:13}
Hoffart, J., Suchanek, F.M., Berberich, K., Weikum, G.:
\newblock {YAGO2: A spatially and temporally enhanced knowledge base from
  Wikipedia}.
\newblock {Artificial Intelligence} \textbf{194} (2013)  28--61

\bibitem{RDF_Schema_W3C:14}
Brickley, D., Guha, R.V.:
\newblock {RDF Schema 1.1, W3C Recommendation 25 February 2014}.
\newblock {W3C Recommendation}, {World Wide Web Consortium} (February~25 2014)

\bibitem{N-ary_Relations_W3C:06}
Noy, N.F., Rector, A.L.:
\newblock {Defining N-ary Relations on the Semantic Web, W3C Working Group Note
  12 April 2006}.
\newblock {W3C Note}, {World Wide Web Consortium} (April~12 2006)

\bibitem{Nguyen_Bodenreider_Sheth:14}
Nguyen, V., Bodenreider, O., Sheth, A.:
\newblock {Don't like RDF reification?: making statements about statements
  using singleton property}.
\newblock In Chung, C.W., Broder, A.Z., Shim, K., Suel, T., eds.: {23rd
  International World Wide Web Conference, WWW '14, Seoul, Republic of Korea,
  April 7-11, 2014}, {ACM Press} (April 2014)  759--770

\bibitem{GimenezGarcia_Zimmermann_Maret:17}
Gim{\'e}nez-Garc{\'i}a, J.M., Zimmermann, A., Maret, P.:
\newblock {NdFluents: An Ontology for Annotated Statements with Inference
  Preservation}.
\newblock In Blomqvist, E., Maynard, D., Gangemi, A., Hoekstra, R., Hitzler,
  P., Hartig, O., eds.: {The Semantic Web - 14th International Conference, ESWC
  2017, Portoro{\v{z}}, Slovenia, May 28 - June 1, 2017, Proceedings, Part I}.
  Volume 10249 of {Lecture Notes in Computer Science}., {Springer} (May 2017)
  638--654

\bibitem{OWL_New_Features_W3C:12}
Golbreich, C., Wallace, E.K.:
\newblock {OWL 2 Web Ontology Language, New Features and Rationale (Second
  Edition), W3C Recommendation 11 December 2012}.
\newblock {W3C Recommendation}, {World Wide Web Consortium} (December~11 2012)

\bibitem{Scott:79}
Scott, D.S.:
\newblock {Identity and existence in intuitionalistic logic}.
\newblock In Fourman, M., Mulvay, C., Scott, D.S., eds.: {Applications of
  Sheaves, Proceedings of the Research Symposium on Applications of Sheaf
  Theory to Logic, Algebra, and Analysis, Durham, July 9-21, 1977}. Volume 753
  of {Lecture Notes in Mathematics}., {Springer} (1979)  660--696

\bibitem{CuencaGrau_Kutz:07}
Cuenca-Grau, B., Kutz, O.:
\newblock {Modular Ontology Languages Revisisted}.
\newblock In Honavar, V.G., Finin, T.W., Caragea, D., Mladenic, D., Sure, Y.,
  eds.: {SWeCKa 2007: Proceedings of the IJCAI-2007 Workshop on Semantic Web
  for Collaborative Knowledge Acquisition, Hyderabad, India, January 7, 2007}.
  (January 2007)

\bibitem{Goguen_Burstall:92}
Goguen, J.A., Burstall, R.M.:
\newblock {Institutions: Abstract Model Theory for Specification and
  Programming}.
\newblock {Journal of the ACM} \textbf{39}(1) (1992)  95--146

\end{thebibliography}

\end{document}